\pdfoutput=1
\RequirePackage[T1]{fontenc}
\RequirePackage{fix-cm}
\RequirePackage{stix2}

\documentclass[final,number,a4paper,12pt]{elsarticle}
\biboptions{sort}

\usepackage[pass]{geometry}
\usepackage{amsmath,amsthm}
\usepackage{hyperref}
\newcommand{\ket}[1]{|#1 \rangle}

\theoremstyle{plain}%
\newtheorem{theorem}{Theorem}
\newtheorem{proposition}[theorem]{Proposition}
\newtheorem{lemma}[theorem]{Lemma}

\theoremstyle{remark}%
\newtheorem{remark}[theorem]{Remark}

\theoremstyle{definition}%

\newtheorem{definition}[theorem]{Definition}

\begin{document}
\begin{frontmatter}
\title{Information locality of a quantum locally recoverable code}
\author[1]{Ryutaroh Matsumoto\corref{cor1}}
\affiliation[1]{organization={Department of Information and Communications Engineering,
Institute of Science Tokyo},
addressline={2-12-1 Ookayama},
city={Meguro},
postcode={152-8550},
state={Tokyo},
country={Japan}}
\ead{ryutaroh@ict.e.titech.ac.jp}
\cortext[cor1]{ORCID: 0000-0002-5085-8879}
\date{3 August 2026}
\begin{abstract}
  A classical linear code $C$ of length $n$
  is said to have symbol locality $(r, \delta)$
  if for any index $j$ there exists a repair group
  $J_j \subseteq \{1, \ldots, n\}$ with $j\in J_j$
  and $|J_j| \leq r+\delta-1$
  such that any $\delta-1$ or fewer erasures in $J_j$
  can be corrected by using codeword symbols only in $J_j$.
  Later it turned out that this way of defining $r$
  overestimates the number of necessary codeword symbols for
  multiple-erasure correction, and information locality was proposed to
  define $r$ as the dimension of the punctured code of $C$ onto $J_j$.
  Recently locality $(r,\delta)$ was proposed for quantum
  error-correcting codes by following the original definition of
  symbol locality $(r, \delta)$.
  We propose a quantum counterpart of the information locality
  for quantum stabilizer codes constructed by Hermitian orthogonality,
  and a linear algebraic procedure computing a smaller repair group
  predicted by the proposed information locality and
  simultaneously reducing the number of measured observables
  in decoding to its minimum possible value.
  Then we demonstrate that the previously proposed
  definition of quantum locality $(r,\delta)$ has the
  same drawback of overestimating the number of necessary codeword symbols for
  erasure correction by providing an explicit example
  of a quantum stabilizer code.
  Finally, we will give another example of a quantum stabilizer code
  constructed by Euclidean orthogonality and two different linear codes,
  with which a natural translation of the classical information locality
  into the quantum setting underestimates
  the number of necessary codeword symbols for
  erasure correction.
\end{abstract}
\begin{keyword}
  erasure correction \sep local recovery \sep quantum error correction
  \MSC[2020]{81P73 \sep 94B65 \sep 94B35}
\end{keyword}
\end{frontmatter}

\section{Introduction}\label{sec1}
An \textit{erasure} in quantum and classical error correction means an error
whose position in a codeword is known \citep{bennett97,grassl97,pless98}.
It is known that a quantum error-correcting code can correct twice as many erasures
as errors.
In light of this,
recent papers \citep{wu2022,kang2023} take advantage of erasures in
quantum fault-tolerant computation,
as some physical devices allow identification of qubits with erasures in a codeword
without destruction of encoded quantum information or stabilizer measurements
\citep{wu2022,kang2023}.

Measurements are costly on some physical devices
and measurement-free fault-tolerant computation has been
actively investigated recently \citep{perlin2023,heussen2024,veroni2024}.
Reducing the number of measurements in quantum error correction has also been investigated \citep{zhou25}.
In particular, measurements cause disturbance of measured qubits
on some devices \citep{perlin2023,heussen2024,veroni2024,zhou25}.
The standard procedure for quantum erasure correction 
involves measurements as described in \citep{delfosse2020}.

An approach to minimize the number of qudits acted on during an
erasure correction procedure is the quantum local recovery
\citep{golowich25}, which is based on the classical local recovery \citep{gopalan12}
of a single erasure.
Recent developments of quantum local recovery track those of
classical local recovery, some of which will be briefly reviewed here.
After \citep{gopalan12},
\citet{prakash12} considered multiple-erasure correction and
introduced $(r, \delta)$ symbol locality, which roughly allows
correction of $(\delta - 1)$ or fewer erasures by a repair group
consisting of at most $(r + \delta - 1)$ codeword symbols
including erasures. The parameter $r$ was meant to capture the
number of additional codeword symbols required for correcting
$(\delta - 1)$ erasures in its repair group.
Later the same authors \citep{kamath14} observed that
the original definition $r$ of symbol locality sometimes
overestimates the
number of required codeword symbols
and proposed $(r,\delta)$ information locality.
From the same research group,
the hierarchical local recovery was proposed \citep{sasidharan15},
in which the punctured code onto a repair group has smaller locality
$(r_2, \delta_2)$ and roughly allows correction of $(\delta_2 - 1)$
or fewer erasures by using only $r_2$ additional codeword symbols
instead of $r (> r_2)$ ones.
The classical hierarchical local recovery was defined by using the
information locality instead of the symbol locality.

After the introduction of quantum local recovery for single-erasure correction
\citep{golowich25},
many results followed \citep{aqlrc26,luo25,li2025optimalquantumlrcshermitian,li2025improvedboundsoptimalconstructions,sharma25,xie25,bu2025quantumlocallyrecoverablecode} for single-erasure correction.
Following the classical development as reviewed above,
\citet{qlrc24} proposed the quantum $(r,\delta)$ locality correcting
multiple erasures,
which was a direct translation of the classical $(r, \delta)$ symbol locality \citep{prakash12}.
We also observe very active research for quantum local recovery of multiple erasures
\citep{cao2025optimalquantumrdeltalocallyrepairable,zhou2025optimalquantumrdeltalocallyrepairable,cao2026matrix1,cao2026matrix2,galindo26qhbch,galindo2025optimalquantumlocallyrecoverable}.

Very recently, \citet{hqlrc26} introduced
quantum hierarchical local recovery, in which the quantum symbol locality
by \citet{qlrc24} was generalized,
while the corresponding classical definition \citep{sasidharan15} used the more accurate information
locality.
Our goal is to define a quantum counterpart of the classical information locality
so that we can estimate more accurately the number of required codeword symbols
for erasure correction than
the locality defined in \cite{qlrc24}.

The classical information locality is defined as the maximum dimension of
the punctured codes onto repair groups \citep[Definition 2]{kamath14}.
This definition is based on the fact that an $[n,k, \delta]$ linear code
always has a set of $k$ codeword symbols 
which is disjoint from $(\delta - 1)$ erasures and
allows reconstruction of encoded messages consisting of $k$ symbols.
This is no longer true for quantum error-correcting codes.
For example, each codeword symbol
in the binary $[[5,1,3]]_2$ stabilizer code \citep{gottesman96}
has no information, which can be seen from the fact that the density
matrix of every  symbol of any codeword
is $(1/2) I_{2 \times 2}$
independently of encoded messages.
This is a stark contrast to a binary $[n,1]_2$ linear code $C$ whose
dual has the minimum Hamming distance $\geq 2$.
Every codeword bit in $C$ varies with encoded $1$-bit messages,
which can be reconstructed from any codeword bit in $C$.
So we cannot use the number of information symbols
in the punctured \emph{quantum} error-correcting code onto a repair group
as a definition of quantum information locality,
as it underestimates the number of required codeword symbols.
We will propose another algebraic quantity that
is a tighter upper bound on the number of required codeword symbols
than the quantum symbol locality proposed in \cite{qlrc24}.

After reviewing necessary notions in Section \ref{sec2},
in order to capture the locality of quantum erasure correction
in a more accurate way,
in Section \ref{sec3}
we will propose quantum $(r,\delta)$ information
locality for the quantum stabilizer codes \citep{gottesman96,calderbank97,calderbank98,ashikhmin00,ketkar06} defined by the Hermitian dual-containing
classical linear codes.
We also propose a linear algebraic procedure computing a smaller repair group
  predicted by the proposed information locality and
  simultaneously computing observables to be measured
  in quantum local recovery
  that reduce the number of measured observables
  to its minimum possible value.
  In Section \ref{sec:diff}
  an advantage of quantum information locality over symbol one
is demonstrated by an explicit small quantum stabilizer code
with which the information locality is smaller than the symbol one,
which means the proposed locality more accurately captures the number of required
codeword symbols for erasure correction.
Finally, in Section \ref{sec4} we give an obstacle
for defining the quantum information locality
for a wider class of quantum error-correcting codes than
the class of quantum stabilizer codes considered in this paper.

\section{Preliminaries}\label{sec2}
Let $q$ be a prime power, $n$ a positive integer and $\mathbf{F}_{q^2}$
the finite field with $q^2$ elements.
We will consider quantum error-correcting codes (QECCs) of length $n$.
For $\vec{x}=(x_1, \ldots, x_n)$ and $\vec{y}=(y_1, \ldots, y_n)\in \mathbf{F}_{q^2}^n$,
their Hermitian inner product is defined by
\begin{equation*}
  \langle \vec{x}, \vec{y}\rangle_h
  = \sum_{i=1}^n x_i^q y_i,
\end{equation*}
and the Hamming weight of $\vec{x}$ is denoted  by $w_H(\vec{x})$.
For a set $C$ of vectors, by $w_H(C)$ we denote the minimum
Hamming weight
of \emph{nonzero} vectors in $C$.
For a subset $J \subset \{ 1, \ldots, n\}$ and
$\vec{x}=(x_1, \ldots, x_n)$,
$\pi_J(\vec{x})$ denotes the projected vector $(x_j)_{j \in J} \in \mathbf{F}_{q^2}^{|J|}$.
For an $\mathbf{F}_{q^2}$-linear space $C \subset \mathbf{F}_{q^2}^n$
and $J \subset \{ 1, \ldots, n\}$,
the punctured code of $C$ onto $J$
is defined as
\begin{equation*}
  \pi_J(C) = \{ \pi_J(\vec{x}) : \vec{x} \in C\},
\end{equation*}
the shortened code of $C$ onto $J$
is defined as
\begin{equation*}
  \sigma_J(C) = \{ \pi_J(\vec{x}) : \vec{x} \in C \text{ such that }
  \pi_{\overline{J}}(\vec{x}) = \vec{0} \}
\end{equation*}
as in \citep{pless98},
where $\overline{J} = \{ 1, \ldots, n\} \setminus J$,
and the dual code of $C$ with respect to the Hermitian
inner product is defined as
\begin{equation*}
  C^{\perp h} = \{ \vec{y} \in \mathbf{F}_{q^2}^n :
  \langle \vec{x}, \vec{y}\rangle_h = 0 \text{ for all } \vec{x} \in C\}.
\end{equation*}
It was noted in \citep{galindo19} that
\begin{equation}\label{eq:dual}
  \begin{split}
  \pi_J(C)^{\perp h} &= \sigma_J(C^{\perp h}),\\
  \sigma_J(C)^{\perp h} &= \pi_J(C^{\perp h}).
  \end{split}
\end{equation}

Let $\mathcal{H}_q$ be the $q$-dimensional complex linear space
$\mathbf{C}^q$, which is the state space of a single qudit.
A $q$-ary QECC of length $n$ is a complex subspace $Q \subset \mathcal{H}_q^{\otimes n}$.
Each vector $\vec{x} \in \mathbf{F}_{q^2}^n$
defines the quantum error $M(\vec{x})$ acting on $n$ qudits \citep{ashikhmin00,ketkar06}.
A QECC $Q$ is said to have distance $d$ if
there exists a vector $\vec{x}\in \mathbf{F}_{q^2}^n$ with $w_H(\vec{x}) =d$
such that $M(\vec{x})Q = Q$ and $M(\vec{x})$ changes some quantum
codeword $\ket{\varphi} \in Q$ to another quantum codeword,
and for all vectors $\vec{y} \in \mathbf{F}_{q^2}^n$ with $w_H(\vec{y}) <d$
and all quantum codewords $\ket{\varphi} \in Q$
we have either
$M(\vec{y})Q \perp Q$ or $M(\vec{y})\ket{\varphi}$ is
a scalar multiple of $\ket{\varphi}$.
The code $Q$ with distance $d$
is said to be \emph{pure} if every quantum error $M(\vec{y})$
sends $Q$ to another space orthogonal to $Q$ for $w_H(\vec{y}) < d$
and is said to be \emph{impure} otherwise.
A QECC $Q \subset \mathcal{H}_q^{\otimes n}$ of dimension
$q^k$ and distance $d$ is said to be
an $[[n,k,d]]_q$ code \citep{calderbank98},
and it can correct $t$ errors and $e$ erasures simultaneously
if $2t + e < d$.
It was shown in \citep{calderbank98,ashikhmin00,ketkar06}
that a dual-containing space $C \supsetneq C^{\perp h}$
defines an $[[n, 2 \dim C - n, w_H(C\setminus C^{\perp h})]]_q$
code $Q(C)$ that is called the stabilizer code.

Let $I \subsetneq J \subseteq \{1, \ldots, n\}$.
It was shown in \citep{qlrc24} that
erasures in $I$ can be corrected by quantum measurements
and unitary operations acting on qudits only in $J$
if and only if
\begin{equation}\label{eq:cond1}
  \sigma_I(\pi_J(C)) = \sigma_I(C^{\perp h}),
\end{equation}
which can be seen equivalent to
\begin{equation}\label{eq:cond2}
  \pi_I (C) = \pi_I (\sigma_J(C^{\perp h})),
\end{equation}
by using (\ref{eq:dual}).
\Citet{qlrc24} also showed that
any erasure correction procedure for the punctured
stabilizer code $Q(\pi_J(C))$
can be used for correcting erasures in $I$
if (\ref{eq:cond1}) or (\ref{eq:cond2}) holds.

\section{Information locality}\label{sec3}
\subsection{Definitions of locality}
Consider an $\mathbf{F}_{q^2}$-linear code $C \subset \mathbf{F}_{q^2}^n$.
The code $C$ is said to have classical symbol locality $(r_{C,s},\delta_C)$
\cite[Definition 1]{kamath14}
if for each index $j$ there exists $J_j \subseteq \{1,2, \ldots, n\}$
with $j \in J_j$
such that the punctured code $\pi_{J_j}(C)$ has the minimum Hamming
distance $\geq \delta_C$ and
\begin{equation}\label{eq:coord}
  |J_j| \leq r_{C,s} + \delta_C - 1.
\end{equation}
In this paper we will call $\delta_C$ the \emph{classical local distance}.
It was observed in \citep{kamath14,grezet19} that the number of necessary codeword
symbols for correcting $\delta_C-1$ erasures is often less than
$|J_j| - \delta_C + 1$. In order to better estimate the number of necessary
codeword symbols for correcting $\delta_C-1$ erasures,
the information locality \citep[Definition 2]{kamath14}
was defined as follows:
The code $C$ is said to have classical information locality $(r_{C,i},\delta_C)$
if the same set of conditions holds with (\ref{eq:coord}) replaced by
\begin{equation}\label{eq:Cinfo}
  \dim \pi_{J_j}(C) \leq r_{C,i}.
\end{equation}
With the information locality there is no condition on
the size of $J_j$.
The information locality for classical linear codes
was later generalized to classical quasi-uniform codes
and called dimension locality
\citep[Definition 4]{grezet19}.
Since every codeword symbol in $\pi_J(C)$ can
be reconstructed by at most $\dim \pi_J(C)$ symbols,
classical information locality captures the local property
more accurately.
The set $J_j$ is called a repair group for index $j$
in both definitions of locality.

If an $[n,k,d]$ linear code has either information or
symbol locality $(r, \delta)$, then those parameters must satisfy
the classical Singleton-like bound \citep{prakash12,kamath14}
\begin{equation}\label{eq:SingletonC}
  k+d + \left( \left\lceil \frac{k}{r} \right\rceil - 1 \right)(\delta - 1) \leq n+1.
\end{equation}

Consider a QECC $Q \subset \mathcal{H}_q^{\otimes n}$.
The code $Q$ is said to have quantum
symbol locality $(r_{Q,s},\delta_Q)$ \citep{qlrc24}
if for each index $j$ there exists $J_j \subseteq \{1,2, \ldots, n\}$
with $j \in J_j$
such that any $\delta_Q - 1$ or fewer erasures in $J_j$ satisfying (\ref{eq:coord})
can be corrected by a quantum operation
(completely positive trace-preserving map) 
acting on codeword qudits in $J_j$ of $Q$.
In this paper we call $\delta_Q$  the \emph{quantum local distance}.

\subsection{Information locality of quantum locally recoverable  codes by Hermitian orthogonality}
Fix a Hermitian dual-containing
linear code $C \subset \mathbf{F}_{q^2}^n$
and we will consider the stabilizer code $Q(C)$ defined by it.
We also fix
an erasure index $j$, its repair group $J_j \subset \{1, \ldots, n\}$ and
the  set
$I \subset J_j$ of actual erasures.
Recall that $J \subset \{1, \ldots, n\}$ can be a repair
group for the erasure set $I$ if and only if (\ref{eq:cond2}) holds.
\begin{proposition}\label{lem1}
  For a given erasure set $I \ni j$ and a repair group $J_j$
  satisfying (\ref{eq:cond2}) with $J$ replaced by $J_j$,
  there exists a subset $J \subseteq J_j$
  with $J \supseteq I$ such that $J$ satisfies (\ref{eq:cond2})
  and
  \begin{equation}\label{eq:size}
    |J| - |I| \leq \dim \pi_{J_j}(C) - \dim \sigma_I(C^{\perp h}).
  \end{equation}
  Such a set $J$ can be computed in $O(|J_j|^3)$ arithmetic
  operations in $\mathbf{F}_{q^2}$.
  The computational procedure also simultaneously provides
  an $\mathbf{F}_{q^2}$-basis defining a set of
  observables needed in quantum local recovery.
\end{proposition}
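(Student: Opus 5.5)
The plan is to work with the equivalent condition (\ref{eq:cond1}) instead of (\ref{eq:cond2}), and to build $J = I \cup T$ by adding coordinates $T \subseteq J_j \setminus I$ greedily.

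\emph{Setup.} Put $D = \pi_{J_j}(C)$ with $R = \dim D$, and $S = \sigma_I(C^{\perp h})$ with $s = \dim S$. For $T \subseteq J_j\setminus I$ and $J = I\cup T$, the shortened code $\sigma_I(\pi_J(C))$ is the $I$-projection of $K_T = \{\vec{x}\in D : \pi_T(\vec{x}) = \vec{0}\}$. Since $C^{\perp h}\subseteq C$, the inclusion $\sigma_I(\pi_J(C)) \supseteq \sigma_I(C^{\perp h})$ always holds. So (\ref{eq:cond1}) for $J$ is equivalent to $K_T \subseteq A$, where
\[
A = \{\vec{x}\in D : \pi_I(\vec{x})\in S\}.
\]
The hypothesis, namely (\ref{eq:cond1}) for $J_j$, says that $K_{J_j\setminus I} \subseteq A$. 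Moreover, $\pi_I$ maps $K_{J_j\setminus I}$ isomorphically onto $\sigma_I(\pi_{J_j}(C)) = S$, so $\dim K_{J_j\setminus I} = s$.

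\emph{Greedy step.} Start with $T=\emptyset$, so $K_T = D$. While $K_T \not\subseteq A$, pick $\vec{x}\in K_T\setminus A$. Because $K_{J_j\setminus I}\subseteq A$, the vector $\vec{x}$ does not vanish on all of $J_j\setminus I$. It vanishes on $T$, so it is nonzero at some $t \in (J_j\setminus I)\setminus T$. Add $t$ to $T$. The new kernel $K_{T\cup\{t\}}$ is the kernel of a single nonzero linear functional on $K_T$, so its dimension drops by exactly one.

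\emph{Termination and bound.} Throughout the process $K_T \supseteq K_{J_j\setminus I}$, so $\dim K_T \geq s$. Hence the loop runs at most $R - s$ times, and it stops with $K_T\subseteq A$. The resulting $J = I\cup T$ satisfies (\ref{eq:cond1}), and therefore (\ref{eq:cond2}), with
\[
|J|-|I| = |T| \leq \dim \pi_{J_j}(C) - \dim\sigma_I(C^{\perp h}),
\]
which is (\ref{eq:size}).

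\emph{Complexity.} All steps are Gaussian elimination on a generator matrix of $D$ with $|J_j|$ columns: computing $S$ via (\ref{eq:dual}), testing $K_T\subseteq A$, and updating $K_T$ by one column constraint. If $K_T$ is maintained in echelon form and each of the at most $|J_j|$ updates costs $O(|J_j|^2)$, the total is $O(|J_j|^3)$ operations over $\mathbf{F}_{q^2}$.

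\emph{Observables.} The same elimination produces a basis of $\sigma_J(C^{\perp h})$ (dual to $\pi_J(C)$ by (\ref{eq:dual})). I would take this basis as the stabilizer generators to be measured, since their number equals the dimension, which is the minimum possible.

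The main point to check carefully is the claim that each greedy step uses a coordinate outside $I$ and lowers $\dim K_T$ by exactly one. This is exactly where the hypothesis on $J_j$ is used. The minimality claim for the observables may need a separate short counting argument.
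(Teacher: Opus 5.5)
Your argument is correct, including the step you flagged: a vector of $K_T\setminus A$ cannot vanish on all of $J_j\setminus I$, and evaluation at the chosen $t$ is a nonzero functional on $K_T$. It takes a different route from the paper, though.

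The paper stays with (\ref{eq:cond2}) on the stabilizer side. It splits $\sigma_{J_j}(C^{\perp h}) = V \oplus \sigma_I(C^{\perp h}) \oplus \sigma_{J_j\setminus I}(C^{\perp h})$ and row-reduces a basis of $\sigma_{J_j\setminus I}(C^{\perp h})$ to get a pivot set $P\subseteq J_j\setminus I$. It then clears the columns $P$ from a basis of $V$ and takes $J$ to be $I$ plus the supports outside $I$ of the cleared vectors. The bound is the count $|J|-|I|\le |J_j\setminus I|-|P|$, rewritten through (\ref{eq:dimequal}) and (\ref{eq:dual}).

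You work with (\ref{eq:cond1}) on $D=\pi_{J_j}(C)$ instead. Your bound comes from a termination argument whose floor $\dim K_{J_j\setminus I}=\dim\sigma_I(C^{\perp h})$ is supplied by the hypothesis. Your version is more elementary and adaptive. It also shows plainly that the right-hand side of (\ref{eq:size}) equals $\dim\pi_{J_j\setminus I}(C)$, and that the added coordinates are independent for that code. The paper's added set likewise lies inside the information set $(J_j\setminus I)\setminus P$ of $\pi_{J_j\setminus I}(C)$.

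In fact, taking $T$ to be any information set of $\pi_{J_j\setminus I}(C)$ already forces $K_T=K_{J_j\setminus I}\subseteq A$ with no membership tests. That is the cleanest justification of the $O(|J_j|^3)$ claim. Redoing the test $K_T\subseteq A$ from scratch at every step would cost $O(|J_j|^3)$ per step, which your sketch does not account for.

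What the paper's route buys is the observables directly. The rows of $B_I$ and $\widetilde{B}_{1,\mathrm{rem}}$ are already a basis of $\sigma_J(C^{\perp h})$ and yield (\ref{eq:decomposition}), which the Remark uses for minimality. Your elimination produces $\pi_J(C)$, not $\sigma_J(C^{\perp h})$ as you claim. So you need one further dualization (still $O(|J_j|^3)$), plus the extra argument you anticipated for minimality. That argument does go through:
\begin{itemize}
\item After your greedy, $\pi_T(C)=\mathbf{F}_{q^2}^{|T|}$, so $\sigma_T(C^{\perp h})=\{\vec{0}\}$ and hence $\sigma_{J_j\setminus I}(C^{\perp h})\cap\sigma_J(C^{\perp h})=\{\vec{0}\}$.
\item Moreover $\dim\pi_J(C)=|T|+\dim\sigma_I(C^{\perp h})$, so $\dim\sigma_J(C^{\perp h})=|I|-\dim\sigma_I(C^{\perp h})=\dim\pi_I(C)$.
\end{itemize}
Then (\ref{eq:dimequal}) gives (\ref{eq:decomposition}) for your $J$ as well.
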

\begin{proof}
  In this proof, by abuse of notation
  all vectors are assumed to have $|J_j|$ components.
  For example, $(|J_j| - |I|)$ zeros are appended
  for vectors in $\sigma_I(C^{\perp h})$.
  Decompose $\sigma_{J_j}(C^{\perp h}) = V \oplus \sigma_I(C^{\perp h}) \oplus \sigma_{J_j\setminus I}(C^{\perp h})$ as a direct sum of three linear spaces.
  By (\ref{eq:cond2}),
  \begin{equation}\label{eq:dimequal}
  \begin{split} \dim \pi_I(C) &= \dim \pi_I(\sigma_{J_j}(C^{\perp h}))\\
    &= \dim \sigma_{J_j}(C^{\perp h}) - \dim \sigma_{J_j}(C^{\perp h}) \cap \ker(\pi_I)\\
    &= \dim \sigma_{J_j}(C^{\perp h}) - \dim \sigma_{J_j\setminus I}(C^{\perp h})\\
    &= \dim V + \dim \sigma_I(C^{\perp h}).
  \end{split}
  \end{equation}
  After finding bases for
  $\sigma_I(C^{\perp h})$, $V$ and  $\sigma_{J_j\setminus I}(C^{\perp h})$
  by linear algebra,
  we can find the following $\dim \sigma_{J_j}(C^{\perp h}) \times  |J_j|$ matrix
      \begin{equation*}
    B = \begin{pmatrix} B_I \\ B_{1, \mathrm{rem}} \\ B_2 \end{pmatrix} \in \mathbf{F}_{q^2}^{\dim \sigma_{J_j}(C^{\perp h}) \times |J_j|},
    \end{equation*}
    where:
    \begin{itemize}
    \item $B_I \in \mathbf{F}_{q^2}^{\dim \sigma_I(C^{\perp h}) \times |J_j|}$ forms a basis for $\sigma_I(C^{\perp h})$.
      The submatrix $B_I|_{J_j \setminus I}$ of $B$ consisting of
      columns corresponding to $J_j \setminus I$
      is the zero matrix.
        \item $B_{1, \mathrm{rem}} \in \mathbf{F}_{q^2}^{(\dim \pi_I (C)  - \dim \sigma_I(C^{\perp h})) \times |J_j|}$ forms a basis for  $V$.
        \item $B_2 \in \mathbf{F}_{q^2}^{(\dim \sigma_{J_j}(C^{\perp h})-\dim \pi_I (C)) \times |J_j|}$ forms a basis for $\sigma_{J_j\setminus I}(C^{\perp h})$. The submatrix $B_2|_I$ is the zero matrix.
    \end{itemize}

    Secondly, compute  a reduced row echelon form $R_2$ of $B_2$.
    Let $P \subseteq J_j \setminus I$ be the set of pivot column indices of $R_2$, with size $|P| = \mathrm{rank}(B_2) = \dim \sigma_{J_j}(C^{\perp h})-\dim \pi_I (C)$.

    Thirdly,
    eliminate pivot columns $P$ in $B_{1, \mathrm{rem}}$ using rows of $R_2$:
    \begin{equation}\label{eq:btilde}
    \widetilde{B}_{1, \mathrm{rem}} = B_{1, \mathrm{rem}} - M R_2,
    \end{equation}
    where a matrix $M$ is chosen to zero out all columns in $P$. Let $\vec{v}_1, \ldots, \vec{v}_{\dim \pi_I (C)  - \dim \sigma_I(C^{\perp h})}$ denote the rows of $\widetilde{B}_{1, \mathrm{rem}}$.

    For a vector $\vec{v}=(v_1, \ldots, v_{|J_j|})$
    by $\mathrm{supp}(\vec{v})$ we denote
    $\{ \ell : v_\ell \neq 0 \}$.
    Let
    \begin{equation*}
    S = \bigcup_{m=1}^{\dim \pi_I (C)  - \dim \sigma_I(C^{\perp h})} \left( \mathrm{supp}(\vec{v}_m) \setminus I \right) \subseteq (J_j \setminus I) \setminus P,
    \end{equation*}
    and we obtain $J = I \cup S \subseteq J_j$.
    The overall computational complexity is easily seen as
    $O(|J_j|^3)$.

    We will verify $\pi_I(C) = \pi_I(\sigma_J(C^{\perp h}))$ of condition (\ref{eq:cond2}).
    For a matrix $A$, by $\langle A \rangle$ we denote its row space.
    Because $\pi_I\langle B_2 \rangle = \{ \vec{0} \}$,
    subtraction of $M R_2$ in (\ref{eq:btilde})
    leaves columns of $B_{1, \mathrm{rem}}$ indexed by $I$ unchanged:
\begin{equation*}
\pi_I(\langle \widetilde{B}_{1, \mathrm{rem}} \rangle ) = \pi_I(\langle B_{1, \mathrm{rem}}\rangle ).
\end{equation*}
Thus, the rows of $B_I$ together with $\widetilde{B}_{1, \mathrm{rem}}$
generate a subspace of $\sigma_J(C^{\perp h})$ whose projection onto $I$ spans $\pi_I(\sigma_{J_j}(C^{\perp h})) = \pi_I(C)$. Every row vector $\vec{u}$ in $B_I$ or $\widetilde{B}_{1, \mathrm{rem}}$ belongs to
$\sigma_{J_j}(C^{\perp h})$ and
has support contained in $J = I \cup S$.
Consequently, $\pi_J(\vec{u}) \in \sigma_J(C^{\perp h})$,
establishing $\pi_I(C) \subseteq \pi_I(\sigma_J(C^{\perp h}))$.
The reverse inclusion $\pi_I(C) \supseteq \pi_I(\sigma_J(C^{\perp h}))$
follows from $C \supsetneq C^{\perp h}$. Therefore
(\ref{eq:cond2}) holds.

Next, we establish the size bound (\ref{eq:size}).
Since all pivot columns $P$ are zeroed out in $\widetilde{B}_{1, \mathrm{rem}}$, the support set $S$ is restricted to $(J_j \setminus I) \setminus P$. Counting symbols gives:
\begin{align}
  |S| \le |J_j \setminus I| - |P| & = (|J_j| - |I|) - (\dim \sigma_{J_j}(C^{\perp h}) - \dim \pi_I(C)) \notag \\
  & = (|J_j| - \dim \sigma_{J_j}(C^{\perp h})) - (|I| - \dim \pi_I(C)). \label{eq:s_bound_step}
\end{align}
By (\ref{eq:dual}) we see $|I| - \dim \pi_I(C) = \dim \sigma_I (C^{\perp h})$.
Substituting this into (\ref{eq:s_bound_step}) and noting
that $\dim \pi_{J_j}(C) = |J_j| - \dim \sigma_{J_j}(C^{\perp h})$ yields:
\begin{equation*}
  |J| - |I| = |S| \leq (|J_j| - \dim \sigma_{J_j}(C^{\perp h})) - \dim \sigma_I (C^{\perp h}) = \dim \pi_{J_j}(C) - \dim \sigma_I(C^{\perp h}).
\end{equation*}

Finally, we will clarify that the computational procedure in this proof
provides a set of observables measured in quantum local recovery.
Recall that quantum local recovery with a repair group $J_j$
is standard erasure correction on punctured quantum codewords
in $Q(\pi_J(C))$, whose stabilizer is defined by $\sigma_{J_j}(C^{\perp h})$
\citep{qlrc24}.
Observables defined by basis vectors for $\sigma_{J_j}(C^{\perp h})$
can be used for measurement in the above local recovery procedure.
On the other hand, since erasures exist only in $I$,
observables corresponding to
$\sigma_{J_j\setminus I}(\sigma_{J_j}(C^{\perp h})) = \sigma_{J_j\setminus I}(C^{\perp h})$
are of no use for identifying erasures in $I$, which was formally proved in \citep{qlrsurface25eprint}.
From the observations in this proof we see that the computed $J$
satisfies
\begin{equation}\label{eq:decomposition}
  \sigma_{J_j}(C^{\perp h}) =   \sigma_{J_j\setminus I}(C^{\perp h}) \oplus \sigma_J(C^{\perp h}), 
\end{equation}
and that the row vectors of $B_I$ and $\widetilde{B}_{1, \mathrm{rem}}$
form a $\mathbf{F}_{q^2}$-basis for $\sigma_J(C^{\perp h})$.
As condition (\ref{eq:cond2}) holds for the computed $J$,
we can just measure observables defined by row vectors of
$B_I$ and $\widetilde{B}_{1, \mathrm{rem}}$
and identify the erasures in $I$, which completes the proof
of the last sentence in Proposition \ref{lem1}.
\end{proof}

\begin{remark}
  As mentioned in Section \ref{sec1},
  since measurements often disturb quantum states of measured qudits,
  it is also important to reduce the number of measured observables
  as well as that of measured qudits.
  By \citep[Theorem 5]{qlrsurface25eprint} and (\ref{eq:decomposition})
  we see that the observables computed in the proof of Proposition \ref{lem1}
  attain the minimum possible number of observables for
  the punctured code $Q(\pi_{J_j}(C))$ to correct erasures in $I$.
\end{remark}

\begin{definition}\label{def:qil}
In the same spirit as the classical information locality,
the code $Q(C)$ is said to have quantum
information locality $(r_{Q,i},\delta_Q)$, which is a novel notion in this paper,
if for each index $j$ there exists $J_j \subseteq \{1,2, \ldots, n\}$
with $j \in J_j$ such that 
we have $w_H(\pi_{J_j}(C) \setminus \sigma_{J_j}(C^{\perp h})) \geq \delta_Q$
and $\dim \pi_{J_j}(C) \leq r_{Q,i}$.
\end{definition}
We will omit ``classical'' and ``quantum'' from locality if
it is clear from context.
Observe that $r_{Q,i}$ is defined in exactly the same way as $r_{C,i}$.
Observe also that for a fixed $C$ we have $r_{C,i} = r_{Q,i}$ but
$r_{C,s} \neq r_{Q,s}$ if $\delta_C \neq \delta_Q$.

The operational meaning of the quantum information locality is clarified below.
\begin{proposition}
  Suppose that $Q(C)$ has the information locality
  $(r_{Q,i},\delta_Q)$ and there is an erasure at $j$-th codeword qudit.
  Then erasures in $I (\subsetneq J_j)$ can be corrected by using
  at most $r_{Q,i}$ additional codeword qudits if $|I| \leq \delta_Q - 1$.
\end{proposition}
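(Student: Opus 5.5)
The plan is to combine Proposition~\ref{lem1} with a check that condition (\ref{eq:cond2}) holds for $J_j$ whenever $|I| \le \delta_Q - 1$. Once (\ref{eq:cond2}) holds for $J_j$, Proposition~\ref{lem1} gives a subset $J$ with $I \subseteq J \subseteq J_j$ that also satisfies (\ref{eq:cond2}). It also gives
\begin{equation*}
|J| - |I| \le \dim \pi_{J_j}(C) - \dim \sigma_I(C^{\perp h}) \le \dim \pi_{J_j}(C) \le r_{Q,i}.
\end{equation*}
The number of additional qudits used is therefore at most $r_{Q,i}$. By the result of \citep{qlrc24} recalled in Section~\ref{sec2}, erasures in $I$ can then be corrected by operations acting only on qudits in $J$. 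So the whole argument reduces to verifying (\ref{eq:cond2}) for $J_j$.

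To verify it, I will work with the equivalent form (\ref{eq:cond1}), $\sigma_I(\pi_{J_j}(C)) = \sigma_I(C^{\perp h})$, where $I$ is regarded as a subset of $J_j$.

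The inclusion $\supseteq$ is immediate. Take $\vec{x}\in C^{\perp h}\subseteq C$ vanishing outside $I$. Its projection onto $J_j$ lies in $\pi_{J_j}(C)$, vanishes outside $I$, and its restriction to $I$ is $\pi_I(\vec{x})$.

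For the inclusion $\subseteq$, take $\vec{y} \in \pi_{J_j}(C)$ with support contained in $I$. Then $w_H(\vec{y}) \le |I| \le \delta_Q - 1$. By Definition~\ref{def:qil}, every nonzero vector of $\pi_{J_j}(C)$ outside $\sigma_{J_j}(C^{\perp h})$ has weight at least $\delta_Q$. Hence $\vec{y} \in \sigma_{J_j}(C^{\perp h})$, so $\vec{y} = \pi_{J_j}(\vec{x})$ for some $\vec{x} \in C^{\perp h}$ vanishing outside $J_j$. Since $\vec{y}$ also vanishes on $J_j \setminus I$, the vector $\vec{x}$ vanishes outside $I$, and so $\pi_I(\vec{x}) \in \sigma_I(C^{\perp h})$.

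By (\ref{eq:dual}), (\ref{eq:cond1}) and (\ref{eq:cond2}) are equivalent, which completes the verification.

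The main obstacle is keeping the shortening and puncturing bookkeeping straight in this weight argument, in particular regarding $\sigma_I$ as taken inside $\pi_{J_j}(C)$ rather than inside $C$. The definition of $\delta_Q$ as a minimum weight over $\pi_{J_j}(C)\setminus \sigma_{J_j}(C^{\perp h})$ is exactly what makes this step work. The fact that $\dim\sigma_I(C^{\perp h}) \ge 0$ may simply be dropped, as done above; keeping it would give a slightly sharper count.
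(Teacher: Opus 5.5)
Your proof is correct and follows the same route as the paper: first show that $J_j$ itself is a valid repair group for $I$, then shrink it with Proposition~\ref{lem1} and drop the nonnegative term $\dim\sigma_I(C^{\perp h})$, which leaves at most $\dim\pi_{J_j}(C)\le r_{Q,i}$ additional qudits. The only difference is in the first step: the paper cites \citep{qlrc24} and uses that the punctured stabilizer code $Q(\pi_{J_j}(C))$ has distance $w_H(\pi_{J_j}(C)\setminus\sigma_{J_j}(C^{\perp h}))\ge\delta_Q$, whereas you verify (\ref{eq:cond1}) for $J_j$ directly by your weight argument, which makes explicit the hypothesis of Proposition~\ref{lem1} that the paper leaves implicit.
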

\begin{proof}
As shown in \citep{qlrc24},
the local erasure correction on $J_j$ can be
done by a standard erasure correction procedure
for $Q(\pi_{J_j}(C)) \subset \mathcal{H}_q^{\otimes |J_j|}$.
Therefore, since the Hermitian dual of $\pi_{J_j}(C)$
is $\sigma_{J_j}(C^{\perp h})$ by (\ref{eq:dual}),
the maximum number of correctable erasures is
$w_H(\pi_{J_j}(C) \setminus \sigma_{J_j}(C^{\perp h})) - 1$.
In addition to the codeword qudits erased,
by Proposition \ref{lem1}, the number of additional codeword
qudits necessary for erasure correction is $\leq \dim \pi_{J_j}(C)$.
\end{proof}

In \citep{qlrc24} relations between
the classical and the quantum localities were clarified.
We will clarify relations between information localities below.
They can be proved in the same way as \citep{qlrc24}
so their proofs will be omitted.

\begin{proposition}\label{prop:cmeansq}
  Let $C \subset \mathbf{F}_{q^2}^n$
  with $C \supsetneq C^{\perp h}$.
  Assume that  $C$ has classical information locality
  $(r_{C,i}, \delta_C)$.
  Then $Q(C)$ has quantum information locality
  $(r_{C,i}, \delta_C)$. \qed
\end{proposition}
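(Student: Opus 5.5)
The plan is to reuse, for every index $j$, the classical repair group $J_j$ given by the hypothesis, and to check the two conditions of Definition \ref{def:qil} directly. The dimension condition needs no work: Definition \ref{def:qil} bounds $\dim \pi_{J_j}(C)$ in exactly the same way as (\ref{eq:Cinfo}). So $\dim \pi_{J_j}(C) \leq r_{C,i}$ holds for the same $J_j$, and we may take $r_{Q,i} = r_{C,i}$.

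The substantive step is the distance condition,
\[
w_H\bigl(\pi_{J_j}(C) \setminus \sigma_{J_j}(C^{\perp h})\bigr) \geq \delta_C .
\]
First I will check that $\sigma_{J_j}(C^{\perp h}) \subseteq \pi_{J_j}(C)$, so that the set difference makes sense inside the punctured code. If $\vec{x} \in C^{\perp h}$ vanishes outside $J_j$, then $\vec{x} \in C$ because $C \supseteq C^{\perp h}$. Hence $\pi_{J_j}(\vec{x}) \in \pi_{J_j}(C)$.

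The key observation is then elementary. Every vector in $\pi_{J_j}(C) \setminus \sigma_{J_j}(C^{\perp h})$ is nonzero, because $\vec{0} \in \sigma_{J_j}(C^{\perp h})$. It is also an element of $\pi_{J_j}(C)$. The hypothesis says that $\pi_{J_j}(C)$ has minimum Hamming distance at least $\delta_C$, so every nonzero vector in it has weight at least $\delta_C$. Removing the subset $\sigma_{J_j}(C^{\perp h})$ can only raise the minimum weight, so $w_H(\pi_{J_j}(C) \setminus \sigma_{J_j}(C^{\perp h})) \geq w_H(\pi_{J_j}(C)) \geq \delta_C$.

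One degenerate case needs attention: $\pi_{J_j}(C) = \sigma_{J_j}(C^{\perp h})$, where the difference is empty. I would treat $w_H$ of the empty set as $+\infty$, the usual convention for pure-distance bounds, so the condition still holds. Alternatively, one can note that in this case the local code has dimension $0$ and erasures are trivially correctable.

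I do not expect any real obstacle; this mirrors the corresponding symbol-locality statement in \citep{qlrc24}. The only points needing care are the empty-set convention just discussed and the inclusion $\sigma_{J_j}(C^{\perp h}) \subseteq \pi_{J_j}(C)$, which follows from dual-containment together with (\ref{eq:dual}).
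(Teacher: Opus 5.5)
Your proposal is correct and follows the argument the paper intends; the paper omits the proof and defers to the analogous symbol-locality result of \citep{qlrc24}. You keep the classical repair groups $J_j$, observe that the dimension condition is literally the same, and use that $\pi_{J_j}(C)\setminus\sigma_{J_j}(C^{\perp h})$ consists of nonzero vectors of $\pi_{J_j}(C)$, so its minimum weight is at least $w_H(\pi_{J_j}(C))\geq\delta_C$. Two small points do not affect validity: the inclusion $\sigma_{J_j}(C^{\perp h})\subseteq\pi_{J_j}(C)$ is not needed for this step, and in the degenerate case the local stabilizer code encodes zero qudits (it is one-dimensional, not of dimension $0$).
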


By Proposition \ref{prop:cmeansq},
one can construct a QECC $Q(C)$ with quantum information locality
$(r, \delta)$ by designing a classical code $C$ with
classical information locality $(r, \delta)$ such that
$C \supsetneq C^{\perp h}$.
A converse of Proposition \ref{prop:cmeansq}
needs an additional assumption.
\begin{proposition}\label{prop:qmeansc}
  Let $C \subset \mathbf{F}_{q^2}^n$
  with $C \supsetneq C^{\perp h}$.
  Assume that  $Q(C)$ has quantum information locality
  $(r_{Q,i}, \delta_Q)$.
  If
  \begin{enumerate}
  \item either $\delta_Q \leq w_H(C^{\perp h})$ or
  \item $Q(C)$ is pure, that is, $w_H(C) = w_H(C \setminus C^{\perp h})$
  \end{enumerate}
  then $C$ has classical information locality
  $(r_{Q,i}, \delta_Q)$. \qed
\end{proposition}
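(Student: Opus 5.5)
We keep the same repair groups $J_j$ that witness the quantum information locality of $Q(C)$. The dimension condition $\dim \pi_{J_j}(C) \le r_{Q,i}$ is literally the same as (\ref{eq:Cinfo}), so it needs no work. The only thing to prove is that the punctured code $\pi_{J_j}(C)$ has minimum Hamming distance at least $\delta_Q$, i.e.\ $w_H(\pi_{J_j}(C)) \ge \delta_Q$.

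We argue by contradiction. Let $\vec{x} \in \pi_{J_j}(C)$ be a nonzero vector with $w_H(\vec{x}) < \delta_Q$. By the hypothesis $w_H(\pi_{J_j}(C)\setminus \sigma_{J_j}(C^{\perp h})) \ge \delta_Q$, such an $\vec{x}$ must lie in $\sigma_{J_j}(C^{\perp h})$. By the definition of the shortened code, there is then $\vec{y} \in C^{\perp h}$ with $\pi_{J_j}(\vec{y}) = \vec{x}$ and $\pi_{\overline{J_j}}(\vec{y}) = \vec{0}$. Hence $\vec{y}$ is a nonzero vector of $C^{\perp h}$ with $w_H(\vec{y}) = w_H(\vec{x}) < \delta_Q$.

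Case 1: $\delta_Q \le w_H(C^{\perp h})$. Then $w_H(\vec{y}) < \delta_Q \le w_H(C^{\perp h})$, which contradicts the definition of $w_H(C^{\perp h})$ as the minimum weight of nonzero vectors in $C^{\perp h}$.

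Case 2: $Q(C)$ is pure, i.e.\ $w_H(C) = w_H(C\setminus C^{\perp h})$. Since $\vec{y} \in C^{\perp h} \subset C$ is nonzero, we get $w_H(\vec{y}) \ge w_H(C) = w_H(C \setminus C^{\perp h})$. It remains to bound $w_H(C\setminus C^{\perp h})$ below by $\delta_Q$. For this we lift from the local to the global code. Take any $\vec{z} \in C\setminus C^{\perp h}$ and consider $\pi_{J_j}(\vec{z})$.

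The main obstacle is that $\pi_{J_j}(\vec{z})$ may be zero or may lie in $\sigma_{J_j}(C^{\perp h})$, so the local hypothesis does not bound $w_H(\vec{z})$ directly. I would try to avoid this rather than force the inequality $w_H(C\setminus C^{\perp h}) \ge \delta_Q$. If $w_H(C) \ge \delta_Q$ already holds, then Case 2 reduces to the argument of Case 1, with $w_H(C^{\perp h}) \ge w_H(C)$ used in place of the hypothesis.

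Otherwise, I would use a minimal-support argument. Take the vector $\vec{y}$ of minimal weight found above. Purity says that some $\vec{z} \in C\setminus C^{\perp h}$ has weight at most $w_H(\vec{y})$. From this I would try to produce a vector of $\pi_{J_j}(C)\setminus\sigma_{J_j}(C^{\perp h})$ of weight less than $\delta_Q$, by combining $\vec{z}$ with $\vec{y}$ and projecting to $J_j$ to obtain a contradiction. This mirrors the corresponding symbol-locality argument of \citep{qlrc24}, which I would follow step by step.
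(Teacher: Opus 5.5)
Your Case 1 is correct, and so is the observation that the dimension condition carries over unchanged. The gap is in Case 2. You name the obstacle but never resolve it. The ``minimal-support'' paragraph is only a plan (``I would try\ldots'', ``follow step by step''), and as sketched it points the wrong way. Combining $\vec z$ with $\vec y$ and projecting onto the \emph{given} repair group $J_j$ cannot work, because a low-weight $\vec z\in C\setminus C^{\perp h}$ need not meet $J_j$ at all. Moreover, the inequality you chose to avoid, $w_H(C\setminus C^{\perp h})\ge\delta_Q$, is exactly the missing step. It is true, and it holds even without purity.

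The missing idea is to use the repair group of a different index. Take $\vec z\in C\setminus C^{\perp h}$ of minimum weight $d=w_H(C\setminus C^{\perp h})$, pick $j'\in\mathrm{supp}(\vec z)$, and use $J_{j'}$ rather than $J_j$.
\begin{itemize}
\item Since $j'\in\mathrm{supp}(\vec z)$, we have $\pi_{J_{j'}}(\vec z)\neq\vec 0$. This removes your ``may be zero'' worry.
\item Suppose $\pi_{J_{j'}}(\vec z)\in\sigma_{J_{j'}}(C^{\perp h})$. Lift it to $\vec y'\in C^{\perp h}$ vanishing outside $J_{j'}$. Then $\vec z-\vec y'\in C\setminus C^{\perp h}$, because $C^{\perp h}\subseteq C$ and $\vec z\notin C^{\perp h}$. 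Its support is $\mathrm{supp}(\vec z)\setminus J_{j'}$, which has size at most $d-1$ since $j'$ is removed. This contradicts the minimality of $d$.
\item Hence $\pi_{J_{j'}}(\vec z)\in\pi_{J_{j'}}(C)\setminus\sigma_{J_{j'}}(C^{\perp h})$. The locality hypothesis then gives $\delta_Q\le w_H(\pi_{J_{j'}}(\vec z))\le d$.
\end{itemize}
Under purity, $C^{\perp h}\subseteq C$ gives $w_H(C^{\perp h})\ge w_H(C)=d\ge\delta_Q$. So condition 1 holds, and your Case 1 argument finishes the proof. In particular, your ``otherwise'' branch $w_H(C)<\delta_Q$ never occurs. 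This reduction of the pure case to condition 1 is presumably the argument the paper defers to when it says the proof goes ``in the same way'' as the cited work.
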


By (\ref{eq:SingletonC}) and Proposition \ref{prop:qmeansc},
if an $[[n,k,d]]$ \emph{pure} stabilizer code $Q(C)$ has either information or
symbol locality $(r,\delta)$ then those parameters must satisfy
\begin{equation}\label{eq:SingletonQ}
  \frac{k+n}{2} +d + \left( \left\lceil \frac{k+n}{2r} \right\rceil - 1 \right)(\delta - 1) \leq n+1
\end{equation}
which was called the quantum \emph{pure} Singleton-like bound
for quantum local recovery in \citep{qlrc24}.
\citet{impure26} showed that (\ref{eq:SingletonQ}) can be violated
if $Q(C)$ is impure.

\section{Example differentiating information and symbol localities}\label{sec:diff}
\begin{proposition}\label{prop:diff}
  There exist $\mathbf{F}_4$-linear codes
  $C^{\perp h} \subsetneq C \subsetneq \mathbf{F}_4^{16}$
  whose stabilizer code $Q(C)$ has parameters $[[16, 2, 3]]_2$
  with symbol locality $(6,3)$ and information locality $(5,3)$.
\end{proposition}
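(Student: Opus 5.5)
The proof is an explicit construction, and the plan is to build $C$ as a direct sum of small local pieces so that every parameter can be read off blockwise.

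\textbf{Construction.} Partition $\{1,\ldots,16\}$ into blocks, most naturally two blocks of size $8$. On each block $J$ I take a Hermitian dual-containing code $C_J \subseteq \mathbf{F}_4^{8}$ and set $C = C_{J_1} \oplus C_{J_2}$, so that $C^{\perp h} = C_{J_1}^{\perp h} \oplus C_{J_2}^{\perp h}$. The dimension count $2\dim C - 16 = 2$ forces $\dim C = 9$. I would therefore take local dimensions $5$ and $4$, or $5$ and $5$ modified by gluing a cross-block vector.

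Within a block of size $8$, symbol locality $r_{Q,s}=6$ with $\delta_Q=3$ means $|J_j| = 8 = 6+3-1$. Information locality $5$ means $\dim \pi_{J_j}(C) \le 5$. So I need a block code with:
\begin{itemize}
\item $\dim \pi_J(C) = 5$,
\item $w_H(\pi_J(C)\setminus \sigma_J(C^{\perp h})) \geq 3$.
\end{itemize}
A natural candidate is an $[8,5]$ Hermitian dual-containing code whose $[8,3]$ dual has minimum distance $\ge 3$. For example, I would try an extended or shortened Hermitian self-orthogonal code over $\mathbf{F}_4$ (from a quantum $[[8,2,3]]$-type code), or search by computer for a $3\times 8$ generator matrix $H$ with $HH^{\dagger}=0$ in the Hermitian sense.

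\textbf{Verification.} Three things must be checked.
\begin{itemize}
\item \emph{Quantum parameters.} Verify $C^{\perp h}\subsetneq C$ and compute $w_H(C\setminus C^{\perp h})=3$ by brute force over the $4^9$ codewords; this gives the $[[16,2,3]]_2$ parameters.
\item \emph{Localities.} For each $j$, take the repair group to be its block. Information locality $(5,3)$ follows from Definition~\ref{def:qil} with $\dim\pi_{J_j}(C)=5$ and the local distance computation. Symbol locality $(6,3)$ follows since $|J_j|=8$ and (\ref{eq:cond1}) holds for every $I\subseteq J_j$ with $|I|\le 2$.
\item \emph{Sharpness of the symbol locality.} The statement presumably means that $6$ is the best symbol locality $r$ and $5$ the best information locality. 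For the symbol locality I must show that no repair group of size $\le 7$ containing $j$ works for correcting $2$ erasures. I would do this by exhaustive search over subsets $J\ni j$ with $|J|\le 7$: check whether $w_H(\pi_J(C)\setminus\sigma_J(C^{\perp h}))\ge 3$, equivalently whether (\ref{eq:cond2}) holds for all $I\subseteq J$ with $|I|\le 2$. For the information locality I would similarly check that no admissible $J$ has $\dim\pi_J(C)\le 4$.
\end{itemize}

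\textbf{Main obstacle.} The hard step is finding a code where the two localities actually differ. This requires $\dim\pi_{J}(C) < |J|-\delta+1$ for the optimal $J$, i.e.\ $\dim\sigma_J(C^{\perp h}) > \delta-1 = 2$, while no smaller $J$ works. A block $[8,5]$ code gives $|J|-\dim\pi_J(C)=3>2$, which is exactly the desired gap. The danger is that some $7$-subset still has local distance $3$, which would give symbol locality $5$ and collapse the gap.

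My first attempt is to pick $C_J^{\perp h}$ as an $[8,3]$ Hermitian self-orthogonal code in which every coordinate is covered in a way that puncturing any one coordinate drops the local quantum distance to $2$. I would then confirm the final choice by computer, for instance with Magma or Sage, and list the generator matrices explicitly in the proof.
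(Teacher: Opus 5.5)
Your outline has the right shape: two blocks of size $8$, an $[8,5]$ Hermitian dual-containing local code with local quantum distance $3$, and $\dim C=9$. But it never produces a code. For an existence statement, the explicit code together with its verification \emph{is} the proof, and every check you list depends on a matrix you intend to find by computer search but do not exhibit.

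The criterion you state for the local piece is also placed on the wrong code. You need $w_H(C_1\setminus C_1^{\perp h})\ge 3$ for the $[8,5]$ code $C_1$. That means a $3\times 8$ matrix $H$ with $HH^\dagger=\mathbf{0}$ whose columns are nonzero and pairwise linearly independent; whether the $[8,3]$ dual has distance $\ge 3$ is not the relevant condition. The missing ingredient is exactly the paper's $H_0$. Its columns all have first coordinate $1$ and are distinct, $H_0H_0^\dagger=\mathbf{0}$, and $w_H(\langle H_0\rangle)=4$. So $\langle H_0\rangle^{\perp h}$ is an $[8,5,3]$ code containing its Hermitian dual, giving a local $[[8,2,3]]_2$ code.

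The obstacle you plan to settle by exhaustive search needs no search inside a block. For a $7$-subset $J$ of a block, $\dim\sigma_J(C^{\perp h})=3-1=2$, so $\pi_J(C)$ has a $2\times 7$ parity-check matrix over $\mathbf{F}_4$. Since $\mathbf{F}_4^2$ has only $5$ one-dimensional subspaces, two columns are dependent (or one is zero). This gives a word of weight $\le 2$ outside $\sigma_J(C^{\perp h})$. This is the argument of Proposition \ref{prop:diff2}; cross-block $7$-subsets are handled there with Lemmas \ref{lem2} and \ref{lem3}.

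With $H_0$ in hand, both of your options close.

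\emph{The ``$5$ and $4$'' direct sum.} Take $C=\langle H_0\rangle^{\perp h}\oplus C_2$ with $C_2\subseteq\mathbf{F}_4^8$ Hermitian self-dual, e.g.\ four copies of $\langle(1,1)\rangle$. This is a legitimate $[[16,2,3]]_2$ witness with symbol locality $(6,3)$ and information locality $(5,3)$. Sharpness for first-block indices follows because $\pi_J(C)$ and $\sigma_J(C^{\perp h})$ split across the blocks. That reduces the question to proper subsets $J'\subsetneq\{1,\ldots,8\}$, where the count above handles $|J'|=7$ and Lemma \ref{lem2} handles $|J'|\le 6$. The example is degenerate, though. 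The second half encodes a fixed stabilizer state, and for $j$ there $\pi_{J_j}(C)\setminus\sigma_{J_j}(C^{\perp h})=\emptyset$, so you must read its minimum weight as $\infty$.

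\emph{The ``$5$ and $5$ with gluing'' option.} This is what the paper does, with the gluing done on the dual side. $C^{\perp h}$ is spanned by $H_0\oplus H_0$ and $(\vec v,\vec v)$, where $\vec v\in\langle H_0\rangle^{\perp h}\setminus\langle H_0\rangle$ and $\langle\vec v,\vec v\rangle_h=1$. This keeps $C^{\perp h}$ self-orthogonal and lowers $\dim C$ from $10$ to $9$. It still gives $\sigma_K(C^{\perp h})=\langle H_0\rangle$ and $\pi_K(C)=\langle H_0\rangle^{\perp h}$ on both halves, and keeps $w_H(C^{\perp h})=4$. So both halves are genuine local $[[8,2,3]]$ codes and $Q(C)$ is pure. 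The price is that cross-block sharpness no longer follows from a direct splitting and needs the rank lemmas.
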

\begin{proof}
  We construct a binary quantum stabilizer code $Q(C)$
  from a classical linear code over
  the finite field $\mathbf{F}_4 = \{0, 1, \omega, \omega^2\}$ (where $\omega^2 + \omega + 1 = 0$) of length $n=16$. 

We begin by defining a $3 \times 8$ matrix $H_0$ whose columns are pairwise linearly independent over $\mathbf{F}_4$:
\begin{equation*}
H_0 = \begin{pmatrix}
1 & 1 & 1 & 1 & 1 & 1 & 1 & 1 \\
0 & 1 & \omega & \omega^2 & 0 & 1 & \omega & \omega^2 \\
0 & 0 & 0 & 0 & 1 & 1 & 1 & 1
\end{pmatrix}.
\end{equation*}
By evaluating the Hermitian inner products of the rows of $H_0$,
we verify that $H_0 H_0^\dagger = \mathbf{0}$,
where $H_0^\dagger$ is the transpose of $H_0$ followed
by component-wise squaring.
We identify a vector $\vec{v} \in \langle H_0\rangle^{\perp h} \setminus \langle H_0 \rangle$ defined as $\vec{v} = (0, 1, \omega^2, \omega, 0, 0, 0, 0)$.
Its Hermitian inner product with itself is $\langle \vec{v}, \vec{v} \rangle_h = 1$.

We construct the generator matrix $G^{\perp h}$ of the dual code $C^{\perp h}$ for $n=16$ by placing two identical blocks of $H_0$ on the diagonal and appending a global parity-check row $(\vec{v}, \vec{v})$ to glue them together:
\begin{equation*}
G^{\perp h} = \begin{pmatrix}
H_0 & \mathbf{0} \\
\mathbf{0} & H_0 \\
\vec{v} & \vec{v}
\end{pmatrix}.
\end{equation*}
The matrix $G^{\perp h}$ has size $7 \times 16$. We
see the self-orthogonality condition $G^{\perp h} (G^{\perp h})^\dagger = \mathbf{0}$
from
\begin{itemize}
    \item The blocks $H_0$ satisfy $H_0 H_0^\dagger = \mathbf{0}$.
    \item The row $(\vec{v}, \vec{v})$ is Hermitian-orthogonal to the $H_0$ blocks because $\vec{v} \in \langle H_0 \rangle^{\perp h}$.
    \item The Hermitian inner product of the row $(\vec{v}, \vec{v})$ with itself is $\langle \vec{v}, \vec{v} \rangle_h + \langle \vec{v}, \vec{v} \rangle_h = 0$.
\end{itemize}
The dimension of the dual code is $\dim C^{\perp h} = 7$.

The primary code $C = \langle G \rangle$ has dimension $\dim C = n - \dim C^{\perp h} = 16 - 7 = 9$.
To explicitly construct the generator matrix $G$ for $C$, we find a weight-3 vector $\vec{u} \in \langle H_0 \rangle^{\perp h}$ with a support completely disjoint from $\vec{v}$. We define $\vec{u} = (0, 0, 0, 0, 0, 1, \omega^2, \omega)$. Because their supports are disjoint, $\langle \vec{u}, \vec{v} \rangle_h = 0$. We form the remaining basis vectors $G_{\mathrm{rem}}$ using $\vec{u}$:
\begin{equation*}
G_{\mathrm{rem}} = \begin{pmatrix}
\vec{u} & \vec{0} \\
\vec{0} & \vec{u}
\end{pmatrix} = \begin{pmatrix}
0 & 0 & 0 & 0 & 0 & 1 & \omega^2 & \omega & 0 & 0 & 0 & 0 & 0 & 0 & 0 & 0 \\
0 & 0 & 0 & 0 & 0 & 0 & 0 & 0 & 0 & 0 & 0 & 0 & 0 & 1 & \omega^2 & \omega
\end{pmatrix},
\end{equation*}
\begin{equation*}
G = \begin{pmatrix} G^{\perp h} \\ G_{\mathrm{rem}} \end{pmatrix}.
\end{equation*}
This explicitly demonstrates that $G^{\perp h}$ is a submatrix of $G$, satisfying $C \supsetneq C^{\perp h}$. We proved that $Q(C)$ has parameters $[[16,2]]$.

We will compute $w_H(C^{\perp h})$.
Any non-zero linear combination of rows strictly within an $H_0$ block yields a weight $\ge 4$. If a vector includes the row $(\vec{v}, \vec{v})$,
its nonzero scalar multiple takes the form $\vec{y} = (\vec{x}_1 + \vec{v}, \vec{x}_2 + \vec{v})$, where $\vec{x}_1, \vec{x}_2 \in  \langle H_0 \rangle$.
Because $\vec{v} \notin \langle H_0 \rangle$, neither $\vec{x}_1+\vec{v}$ nor $\vec{x}_2+\vec{v}$ can be the zero vector $\vec{0}$. Because $H_0$ can be seen as
parity-check for $\vec{x}_1+\vec{v}$ and $\vec{x}_2+\vec{v}$
and $H_0$ contains no parallel columns,
we observe
$w_H(\vec{x}_1+\vec{v}) \ge 3$ and $w_H(\vec{x}_2+\vec{v}) \ge 3$, meaning $w_H(\vec{y}) \ge 6$. The absolute minimum weight is bounded solely by the $H_0$ subcodes
and we see
\begin{equation*}
w_H(C^{\perp h}) = 4.
\end{equation*}

We will compute  $w_H(C)$ and $w_H(C \setminus C^{\perp h})$.
The vector $\vec{c} = (\vec{u}, \vec{0})$
corresponds to the first row of $G_{\mathrm{rem}}$ and
belongs to $C$.
Counting its non-zero elements gives an immediate upper bound
\begin{equation*}
w_H(\vec{c}) = w_H(\vec{u}) = 3 \implies w_H(C) \le 3.
\end{equation*}
Because $G^{\perp h}$ contains no parallel or zero columns, no weight-1 or weight-2
codewords exist in $C$. Therefore, $w_H(C) = 3$.
Since $w_H(C^{\perp h})=4$, the distance of $Q(C)$ is
\begin{equation*}
w_H(C \setminus C^{\perp h}) = 3.
\end{equation*}

We will define repair groups and compute their quantum local distances.
Let $K = \{1, \ldots, 8\}$ and $L = \{9, \ldots, 16\}$,
which will be repair groups for $C$.
For erased position $1 \leq j \leq 8$, the repair group $K$ is used
and otherwise $L$ is used.

The shortened dual code $\sigma_K(C^{\perp h})$ restricts
vectors to be zero on $L$.
This forces the coefficient of the row $(\vec{v}, \vec{v})$
to be exactly $0$, isolating the top-left block
\begin{align*}
\sigma_K(C^{\perp h}) &= \langle H_0 \rangle \implies \dim \sigma_K(C^{\perp h}) = 3, \\
\pi_K(C) &= \langle H_0 \rangle^{\perp h} \implies \dim \pi_K(C) = 8 - 3 = 5.
\end{align*}
Because the vector $\vec{u}$ is completely captured in this projection,
we see the minimum Hamming weight
\begin{equation*}
w_H(\pi_K(C)) = w_H(\vec{u}) = 3.
\end{equation*}
Because $\sigma_K(C^{\perp h}) = \langle H_0 \rangle$, its minimum Hamming
weight is $4$.
Therefore, we see
\begin{equation*}
w_H(\pi_K(C) \setminus \sigma_K(C^{\perp h})) = 3.
\end{equation*}
We have the same analysis for the other repair group $L$,
and we see that the quantum local distance $\delta_Q =3$.
We also see that information locality $(r_{Q,i}, \delta_Q) = (5,3)$
and symbol locality $(r_{Q,s}, \delta_Q) = (6,3)$.
\end{proof}

In order to differentiate the symbol and the information localities,
we must ensure that there is no choice of repair groups
that makes $(r_{Q,s}, \delta_Q) = (5,3)$.
For this purpose, we will introduce notations and lemmas.
For $S \subseteq \{1, \ldots, 8\}$,
$H_0(S)$ denotes the submatrix of $H_0$ consisting
of the $i$-th columns for $i \in S$.
For $i$, $\vec{c}_i$ denotes the $i$-th column of $H_0$.
\begin{lemma}\label{lem2}
  For a subset $S \subset \{1, \ldots, 8\}$
  with $|S| \ge 2$,  we have $\mathrm{rank}(H_0(S)) \ge 2$. \qed
\end{lemma}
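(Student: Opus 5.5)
Lemma \ref{lem2} follows from the fact, noted in the proof of Proposition \ref{prop:diff}, that no two columns of $H_0$ are parallel. I will use only this fact and the observation that every column of $H_0$ is nonzero, since its first entry equals $1$.

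Take $S$ with $|S| \ge 2$ and choose two distinct indices $i, i' \in S$. The matrix $H_0(i,i')$ formed by the columns $\vec{c}_i$ and $\vec{c}_{i'}$ is a submatrix of $H_0(S)$. Hence $\mathrm{rank}(H_0(S)) \ge \mathrm{rank}(H_0(i,i'))$, and it suffices to show that $\vec{c}_i$ and $\vec{c}_{i'}$ are linearly independent over $\mathbf{F}_4$.

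Suppose $a\vec{c}_i + b\vec{c}_{i'} = \vec{0}$ with $(a,b) \neq (0,0)$. If $b = 0$, then $a\vec{c}_i = \vec{0}$ with $a \neq 0$, which is impossible because $\vec{c}_i$ has first entry $1$. The case $a = 0$ is excluded in the same way. If both $a$ and $b$ are nonzero, the first coordinate gives $a + b = 0$, so $a = b$ in characteristic $2$. Then $\vec{c}_i = \vec{c}_{i'}$, which contradicts pairwise independence. Concretely, the pair consisting of the second and third entries, $(x, y)$ with $x \in \mathbf{F}_4$ and $y \in \{0,1\}$, takes each of its $8$ possible values exactly once across the columns, so the $8$ columns are distinct.

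No step is a real obstacle here. The only point needing care is checking that the columns of $H_0$ are distinct. This is immediate from the explicit matrix, since columns $1$ through $4$ have third entry $0$, columns $5$ through $8$ have third entry $1$, and within each group the second entries run through $0, 1, \omega, \omega^2$.
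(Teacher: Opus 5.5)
Your proof is correct and matches the paper's reasoning. The paper states Lemma~\ref{lem2} without proof and relies on the remark made when $H_0$ was introduced, that its columns are pairwise linearly independent over $\mathbf{F}_4$. Your argument uses exactly that fact, and your explicit check (the columns are distinct and all have first entry $1$, so no two are parallel) supplies the detail the paper omits.
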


\begin{lemma}\label{lem3}
  For a subset  $S \subset \{1, \ldots, 8\}$
  with $|S| \ge 5$,  we have $\mathrm{rank}(H_0(S)) =3$.
\end{lemma}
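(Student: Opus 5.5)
We need to show that any 5 columns of $H_0$ span $\mathbf{F}_4^3$. Since the rank is at most $3$, the plan is to assume $\mathrm{rank}(H_0(S)) \le 2$ for some $S$ with $|S|\ge 5$ and derive a contradiction.

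The first step is to understand the rank-$2$ column sets. If $\mathrm{rank}(H_0(S)) \le 2$, then all columns $\vec{c}_i$ with $i\in S$ lie in a common $2$-dimensional subspace $W \subset \mathbf{F}_4^3$. The columns of $H_0$ are pairwise linearly independent, and none is zero. So they represent $8$ distinct points of the projective plane $PG(2,4)$. Moreover, $W$ corresponds to a projective line, which contains exactly $q^2+1 = 5$ points when $q^2=4$.

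The claim therefore reduces to this: no $5$ columns of $H_0$ lie on a single line of $PG(2,4)$, i.e., no line is entirely contained in the column set. By the previous paragraph, this is equivalent to saying that no line contains $5$ of the $8$ points.

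The second step is to describe the column set explicitly and check it. The first coordinate of every column is $1$. The columns are therefore the affine points $(1,a,b)$ with $a\in\mathbf{F}_4$ and $b\in\{0,1\}$. These are exactly the points of the affine plane $\{x_1\neq 0\}$ whose third coordinate lies in $\mathbf{F}_2$.

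A line meeting the affine part has either of two forms:
\begin{itemize}
\item It is the line $x_1=0$ at infinity. This line contains none of our points.
\item It contains at most $4$ affine points plus one point at infinity. Since all our columns are affine, such a line contains at most $4$ columns.
\end{itemize}
Hence every line contains at most $4$ of the $8$ columns. Consequently, any $5$ columns are not contained in a $2$-dimensional subspace, and $\mathrm{rank}(H_0(S))=3$.

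This argument does not even need the restriction $b\in\{0,1\}$. Alternatively, one can argue directly without projective language. If $\mathrm{rank}(H_0(S))\le 2$, there is a nonzero $\vec{y}=(y_1,y_2,y_3)$ with $\vec{y}\cdot\vec{c}_i=0$ for all $i\in S$, i.e., $y_1+y_2a_i+y_3b_i=0$. We split into cases:
\begin{itemize}
\item If $(y_2,y_3)=(0,0)$, then $y_1=0$, which contradicts $\vec{y}\neq 0$.
\item If $(y_2,y_3)\neq(0,0)$, the equation has at most $4$ solutions $(a,b)\in\mathbf{F}_4^2$. Indeed, when $y_2\neq0$, $a$ is determined by $b$; when $y_2=0$, $b$ is fixed.
\end{itemize}
Since the pairs $(a_i,b_i)$ are distinct, at most $4$ columns can satisfy the equation, contradicting $|S|\ge5$.

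The only real obstacle is the observation that all columns have first coordinate $1$. This turns the rank condition into an affine-line count, and each affine line has at most $q^2=4$ points. The rest is routine.
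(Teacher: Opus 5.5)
Your proof is correct, and it takes a different route from the paper.

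\textbf{The paper's route.} It splits $S$ into $S_1 = S\cap\{1,2,3,4\}$ and $S_2 = S\cap\{5,\dots,8\}$. The two halves span the planes $V_1=\{(a,b,0)^T\}$ and $V_2=\{(a,b,a)^T\}$, and no column of one half lies in the other half's plane. From $|S|\ge 5$, one half has at least two columns, which span its plane by Lemma \ref{lem2}, and the other half has at least one column, which leaves that plane. This gives rank $3$.

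\textbf{Your route.} You use only two facts: every column has first coordinate $1$, and the pairs $(a_i,b_i)$ are distinct. A nonzero annihilating vector $\vec y$ then cuts out at most $4$ points of $\mathbf{F}_4^2$ (an affine line). So at most $4$ columns can lie in any $2$-dimensional subspace.

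\textbf{What each buys.} Your argument is shorter and needs no case split. It also does not depend on Lemma \ref{lem2}, which the paper states without proof. It proves the more general fact that any $5$ distinct vectors of the form $(1,a,b)^T$ over $\mathbf{F}_4$ have full rank. The paper's argument instead exploits the specific two-block structure of $H_0$. That matches the $T_1$/$T_2$ decomposition used later in the proof of Proposition \ref{prop:diff2}, and it shows why $5$ is the threshold: each half consists of $4$ collinear points.

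\textbf{A small wording slip.} The line $x_1=0$ does not ``meet the affine part.'' You should say that every line is either the line at infinity or an affine line with exactly $4$ affine points. This does not affect correctness.
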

\begin{proof}
Let $V_1 = \mathrm{span}(\vec{c}_1, \ldots, \vec{c}_4)$ and
$V_2 = \mathrm{span}(\vec{c}_5, \ldots, \vec{c}_8)$.
Since every vector in $V_1$ has a third coordinate of $0$,
$V_1 = \{ (a, b, 0)^T \mid a, b \in \mathbf{F}_4 \}$.
Conversely, every column in the second half of $H_0$
takes the form $(1, x, 1)^T$. Because its third coordinate is $1 \neq 0$, no column from $\vec{c}_5, \ldots, \vec{c}_8$ exists in $V_1$.
We also observe $V_2 = \{ (a, b, a)^T \mid a, b \in \mathbf{F}_4 \}$
and no column from $\vec{c}_1, \ldots, \vec{c}_4$ exists in $V_2$.

Let $S = S_1 \cup S_2$ where
$S_1 = S \cap \{1, 2, 3, 4\}$ and $S_2 = S \cap \{5, 6, 7, 8\}$.
Let $s_1 = |S_1|$ and $s_2 = |S_2|$. We are given $s_1 + s_2 = |S| \ge 5$,
which implies $1 \leq s_1 \leq 4$ and $1 \leq s_2 \leq 4$.
Furthermore, $s_1 + s_2 \ge 5$ implies
that either $s_1 \ge 2$ or $s_2 \ge 2$.

Without loss of generality, assume $s_1 \ge 2$. By Lemma \ref{lem2},
the columns in $S_1$ span the 2-dimensional space $V_1$.
Because $s_2 \ge 1$, there exists at least one index $i \in S_2$. As established,
$\vec{c}_i \notin V_1$.
Adding this linearly independent vector $\vec{c}_i$ to
the 2-dimensional subspace $V_1$ strictly increases
its dimension to 3, proving $\mathrm{rank}(H_0(S)) = 3$.
\end{proof}

\begin{proposition}\label{prop:diff2}
  For the quantum stabilizer code $Q(C)$ constructed in
  the proof of Proposition \ref{prop:diff},
  no choice of repair groups makes the symbol locality
  $(r_{Q,s}, \delta_Q) = (5,3)$.
\end{proposition}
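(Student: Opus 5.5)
The plan is to argue by contradiction. Suppose some choice of repair groups gives symbol locality $(5,3)$. Then there is some index $j$ and a repair group $J \ni j$ with $|J| \le r+\delta-1 = 7$ and $w_H(\pi_J(C)\setminus\sigma_J(C^{\perp h}))\ge 3$. I will show that every $J$ with $|J|\le 7$ admits a vector of weight $\le 2$ in $\pi_J(C)$.

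First I reduce to a statement about $\pi_J(C)$ alone. Every nonzero vector of $\sigma_J(C^{\perp h})$ comes from a nonzero vector of $C^{\perp h}$, so it has weight $\ge w_H(C^{\perp h})=4$. Hence any nonzero vector of weight $\le 2$ in $\pi_J(C)$ automatically lies in $\pi_J(C)\setminus\sigma_J(C^{\perp h})$. By (\ref{eq:dual}), $\pi_J(C)=\sigma_J(C^{\perp h})^{\perp h}$. So it suffices to look at a generator matrix $D$ of $\sigma_J(C^{\perp h})$, whose columns are indexed by $J$, and find either a zero column (giving a weight-1 vector) or two linearly dependent columns (giving a weight-2 vector in the Hermitian dual).

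Next, write $J=J_1\cup J_2$ with $J_1\subseteq K=\{1,\dots,8\}$ and $J_2\subseteq L=\{9,\dots,16\}$. Since $|J|\le 7$, by the symmetry of the construction I may assume $|J_2|\le 3$. A vector of $C^{\perp h}$ has the form $(\vec x_1+a\vec v,\ \vec x_2+a\vec v)$ with $\vec x_1,\vec x_2\in\langle H_0\rangle$ and $a\in\mathbf{F}_4$. If $a=0$, its $L$-part is a codeword of $\langle H_0\rangle$, which is either zero or of weight $\ge 4$; such a part cannot be supported in $J_2$ unless it is zero. If $a\neq 0$, its $L$-part has weight $\ge 3$, as shown in the proof of Proposition~\ref{prop:diff}. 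I then split into two cases.

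\emph{Case $J_2\neq\emptyset$.} Every row of $D$ with $a=0$ vanishes on $J_2$. The $J_2$-parts of all rows of $D$ are therefore multiples of one fixed vector, namely the $J_2$-part of $\vec x_2+\vec v$ for one admissible $\vec x_2$, if any exists. If $|J_2|=1$, or no admissible $\vec x_2$ exists, $D$ has a zero column and we get a weight-1 vector. Otherwise $|J_2|\ge 2$, and any two columns of $D$ indexed by $J_2$ are scalar multiples of one another, since every row of $D$ restricted to $J_2$ is a multiple of one fixed vector. This gives a weight-2 vector in $\pi_J(C)$.

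\emph{Case $J_2=\emptyset$.} Here $\sigma_J(C^{\perp h})$ is the shortening of $\langle H_0\rangle$ onto $J_1$. Put $T=K\setminus J_1$, which is nonempty since $|J_1|\le 7$. A codeword $\vec y H_0$ vanishes on $T$ exactly when $\vec y$ is orthogonal to the columns $\vec c_i$, $i\in T$. Therefore $\dim\sigma_J(C^{\perp h})=3-\mathrm{rank}(H_0(T))$.
\begin{itemize}
\item If $|T|\ge 2$, Lemma~\ref{lem2} gives dimension $\le 1$. A code of dimension $\le 1$ and length $|J_1|\ge 2$ has two dependent columns in $D$ (or a zero column), so we get weight $\le 2$. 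If instead $|J_1|=1$, the column is zero or $\pi_J(C)$ has weight 1 anyway.
\item If $|T|=1$, then $|J_1|=7$ and $D$ is a $2\times 7$ matrix over $\mathbf{F}_4$. But $\mathbf{F}_4^2$ has only $5$ projective points, so by pigeonhole two of the seven columns are dependent or some column is zero.
\end{itemize}

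The main point requiring care is the first case, when $J$ meets both halves. There the gluing row $(\vec v,\vec v)$ couples the blocks, and one must verify that its contribution on the small side $J_2$ is one-dimensional. I expect this to follow cleanly from the weight bounds above; if a subtlety appears (e.g.\ $|J_2|=3$ with $|J_1|=4$), I would fall back on Lemma~\ref{lem3} to control $\mathrm{rank}(H_0(K\setminus J_1))$ and bound the dimension on $J_1$ directly.
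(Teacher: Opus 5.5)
Your proof is correct, but where $J$ meets both halves it takes a different route from the paper.

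The paper works with $|J|=7$ and first proves the global bound $\dim\sigma_J(C^{\perp h})\le 2$. It does this by a rank count on the complements $T_1,T_2$: Lemma~\ref{lem2} on the smaller one, Lemma~\ref{lem3} on the larger one, and the inequality $\dim\pi_{\overline{J}}(C^{\perp h})\ge\dim\pi_{T_1}(\langle H_0\rangle)+\dim\pi_{T_2}(\langle H_0\rangle)$. It then applies a single pigeonhole argument over the five projective points of $\mathbf{F}_4^2$ to a $2\times 7$ parity-check matrix of $\pi_J(C)$.

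You instead argue locally on the smaller side $J_2$. The coefficient $a$ of the gluing row is a linear functional on $\sigma_J(C^{\perp h})$. Its kernel vanishes on $J_2$ because $w_H(\langle H_0\rangle)=4>|J_2|$, so the restriction to $J_2$ has rank at most one. The bound $w_H(\vec x+\vec v)\ge 3$ forces $a=0$ when $|J_2|\le 2$, giving zero columns.

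Your route has three advantages:
\begin{itemize}
\item It produces the dependent columns directly, without Lemma~\ref{lem3} or any bound on the total dimension, and it needs the pigeonhole step only for $J\subseteq K$ with $|J|=7$.
\item It applies verbatim to every $|J|\le 7$, which is the actual constraint in symbol locality $(5,3)$. The paper's proof is written only for $|J|=7$, and its complement count $t_1+t_2=9$ is specific to that size.
\item It explains structurally why the gluing row cannot help on a side of size at most three.
\end{itemize}
The paper's version, in exchange, is more uniform and uses only rank properties of the columns of $H_0$.

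Your closing hedge is unnecessary. The rank-one property on $J_2$ holds for every $J_1$, including $|J_2|=3$, $|J_1|=4$, so no fallback to Lemma~\ref{lem3} is needed. You should, however, write out the ``therefore'' step explicitly via the linear functional $a$ as above, rather than leave it implicit.
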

\begin{proof}
  In order to have
  symbol locality
  $(r_{Q,s}, \delta_Q) = (5,3)$,
  there must exist a repair group $J \subsetneq \{1, \ldots, 16\}$
  such that
  \begin{itemize}
  \item $|J| = 7$, and
  \item $w_H(\pi_J(C) \setminus \sigma_J(C^{\perp h})) \geq 3$.
  \end{itemize}
  In order to prove Proposition \ref{prop:diff2},
  we will show that
  for any $J$ with $|J|=7$,
  we have $w_H(\pi_J(C) \setminus \sigma_J(C^{\perp h})) \leq 2$.

  Since $\sigma_J$ can be seen as the kernel of linear map
  $\pi_{\overline{J}}$, we see that
  \begin{equation}\label{eq:dim2}
    \dim \sigma_J(C^{\perp h}) = \dim C^{\perp h} - \dim \pi_{\overline{J}} (C^{\perp h}).
  \end{equation}
  Firstly we consider the case $J \subsetneq \{1, \ldots, 8\}$.
  We have $\sigma_J(C^{\perp h}) = \sigma_J(\langle H_0 \rangle)$.
  Since $w_H(\langle H_0 \rangle) = 4$ as seen in the proof of Proposition \ref{prop:diff},
  by (\ref{eq:dim2}) we have
  \begin{equation*}
    \dim \sigma_J(\langle H_0 \rangle)  = \dim \langle H_0 \rangle - \underbrace{\dim \pi_{\{1, \ldots, 8\} \setminus J}(\langle H_0 \rangle)}_{= 1}  = 2.
    \end{equation*}
  For the case $J \subsetneq \{9, \ldots, 16\}$
  we also see that $\dim \sigma_J(C^{\perp h})= 2$.

  We consider the case $J$ is contained in
  neither $\{1, \ldots, 8\}$ nor $\{9, \ldots, 16\}$.
  Let $T_1 = \{1, \ldots, 8 \} \setminus J$ and
  $T_2 = \{ 1 \leq i \leq 8  : 8 + i  \notin J \}$.
  We have $J \cup T_1 \cup (8 + T_2) = \{1, \ldots, 16 \}$,
  $|J| + |T_1| + |T_2| = 16$ and $|T_1| + |T_2| =9$.
  Let $t_1 = |T_1|$ and $t_2 = |T_2|$.
  Since $J$ is contained in
  neither $\{1, \ldots, 8\}$ nor $\{9, \ldots, 16\}$,
  we have $(t_1, t_2) \in  \{(2,7), (3,6), (4,5), (5,4), (6,3), (7,2)\}$,
  which implies $\max\{ t_1, t_2 \}\geq 5$.
  By using Lemma \ref{lem2}
  combined with $\min\{t_1, t_2\} \geq 2$ and
  Lemma \ref{lem3} combined with $\max\{t_1, t_2\} \geq 5$,
  we see
  \begin{equation}\label{eq100}
    \dim \pi_{T_1} (\langle H_0 \rangle) + \dim \pi_{T_2} (\langle H_0 \rangle) \geq 2 + 3 = 5.
  \end{equation}
  By the shape of $G^{\perp h}$ we also see
  \begin{equation}\label{eq101}
    \dim \pi_{\{1, \ldots, 16\} \setminus J} (C^{\perp h})
    \geq \dim \pi_{T_1} (\langle H_0 \rangle) + \dim \pi_{T_2} (\langle H_0 \rangle).
  \end{equation}
  By combining (\ref{eq:dim2}), (\ref{eq100}) and (\ref{eq101})
  we see that $\dim \sigma_J(C^{\perp h}) \leq 2$.

  When $\dim \sigma_J(C^{\perp h}) \leq 1$ it is clear
  that $w_H(\pi_J (C)) \leq 2$.
  Suppose that $\dim \sigma_J(C^{\perp h}) = 2$.
  If every pair of distinct vectors in a  subset $U \subseteq \mathbf{F}_4^2$
  is linearly independent, then $|U| \leq 5$. On the other hand,
  the size of a parity-check matrix for $\pi_J (C)$ is $2 \times 7$,
  so it must have a pair of columns that is linearly dependent.
  This means that $w_H(\pi_J(C)) \leq 2$.
  Since $w_H(C^{\perp h}) = 4$, we have $w_H(\sigma_J(C^{\perp h})) \geq 4$
  and $w_H(\pi_J(C) \setminus \sigma_J(C^{\perp h})) \leq 2$.
\end{proof}

\section{Quantum locally recoverable  codes by Euclidean orthogonality}\label{sec4}
In this section we will argue that
there is difficulty in defining a quantum information locality
for quantum stabilizer codes constructed by
Euclidean orthogonality.
Let $C_X$, $C_Z \subsetneq \mathbf{F}_q^n$
be two $\mathbf{F}_q$-linear codes.
The Euclidean dual of an $\mathbf{F}_q$-linear code
will be denoted by $^{\perp e}$.
If $C_X \supseteq C_Z^{\perp e}$ and $C_Z \supseteq C_X^{\perp e}$
then we can construct an $[[n, \dim C_X + \dim C_Z - n, \min\{ w_H(C_X \setminus C_Z^{\perp e}), w_H(C_Z \setminus C_X^{\perp e})\}]]$
stabilizer code $Q(C_X, C_Z)$ \citep{calderbank96,steane96,ashikhmin00,ketkar06}.
Let $\{ \omega, \omega^q\}$ be a normal basis of
  $\mathbf{F}_{q^2}$ over $\mathbf{F}_q$.
  $Q(C_X, C_Z)$ can be written as $Q(C)$ by $C^{\perp h} \subsetneq C \subseteq \mathbf{F}_{q^2}^n$ if and only if $C_X = C_Z$ and $C = \omega C_X + \omega^q C_Z$.
Therefore all the mathematical claims in Section \ref{sec3}
  hold for
  the Calderbank-Shor-Steane quantum codes
  constructed from $C^{\perp e} \subsetneq C \subseteq \mathbf{F}_q^n$
  if $\perp h$ is replaced by $\perp e$.

  Suppose that erasures in $I$ are corrected by a repair group $J \supsetneq I$.
  $Q(C_X, C_Z)$ can correct those erasures if and only if \citep{qlrc24}
  \begin{equation}\label{eq:cond3}
    \begin{split}
  \pi_I (C_X) &= \pi_I (\sigma_J(C_Z^{\perp e})),\\
  \pi_I (C_Z) &= \pi_I (\sigma_J(C_X^{\perp e})).
    \end{split}
  \end{equation}
  
  By using Proposition \ref{lem1} and the relations
  $C = \omega C_X + \omega^q C_Z \subseteq \mathbf{F}_{q^2}^n$,
  we have $\dim_{\mathbf{F}_{q^2}} C = (\dim_{\mathbf{F}_q} C_X + \dim_{\mathbf{F}_q} C_Z)/2$ for $C_X=C_Z$,
  and we immediately deduce the following proposition:
  \begin{proposition}\label{prop:CSS}
    Assume that $C_X = C_Z$.
  For a given erasure set $I \ni j$ and a repair group $J_j$
  satisfying (\ref{eq:cond3}) with $J$ replaced by $J_j$,
  there exists a subset $J \subseteq J_j$ with $J \supseteq I$
  such that $J$ satisfies (\ref{eq:cond3})
  and
  \begin{equation}\label{eq:sizeCSS}
    |J| - |I| \leq \left\lceil \frac{\dim_{\mathbf{F}_q} \pi_{J_j}(C_X) + \dim_{\mathbf{F}_q} \pi_{J_j}(C_Z)}{2} \right\rceil.
  \end{equation}
  \qed
\end{proposition}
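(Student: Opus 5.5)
The plan is to reduce Proposition \ref{prop:CSS} to Proposition \ref{lem1}, using the normal-basis translation stated just before it. Put $C = \omega C_X + \omega^q C_Z \subseteq \mathbf{F}_{q^2}^n$ with $C_X = C_Z$. By the equivalence quoted above, $C^{\perp h} \subsetneq C$ and $Q(C_X,C_Z) = Q(C)$.

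The first step is to transfer the erasure-correction conditions. We check that (\ref{eq:cond3}) for a set $J$ holds if and only if (\ref{eq:cond2}) holds for $C$ and the same $J$. Puncturing and shortening act coordinatewise, and $\{\omega,\omega^q\}$ is an $\mathbf{F}_q$-basis, so
\begin{equation*}
\pi_I(C) = \omega\,\pi_I(C_X) + \omega^q\,\pi_I(C_Z).
\end{equation*}
We expect the analogous description
\begin{equation*}
\sigma_J(C^{\perp h}) = \omega\,\sigma_J(C_Z^{\perp e}) + \omega^q\,\sigma_J(C_X^{\perp e}),
\end{equation*}
possibly with the two summands interchanged, as dictated by the trace-dual of the normal basis. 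This is the standard correspondence $C^{\perp h} \leftrightarrow (C_Z^{\perp e}, C_X^{\perp e})$ underlying the quoted equivalence. Since $C_X = C_Z$, the ordering is harmless. Comparing the $\omega$- and $\omega^q$-components then shows that (\ref{eq:cond2}) is equivalent to the pair (\ref{eq:cond3}). In particular, the hypothesis on $J_j$ gives (\ref{eq:cond2}) for $J_j$.

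The second step is to apply Proposition \ref{lem1} to $C$, $I$ and $J_j$. This yields $J$ with $I \subseteq J \subseteq J_j$ satisfying (\ref{eq:cond2}), hence (\ref{eq:cond3}) by the first step, and
\begin{equation*}
|J|-|I| \le \dim_{\mathbf{F}_{q^2}} \pi_{J_j}(C) - \dim \sigma_I(C^{\perp h}) \le \dim_{\mathbf{F}_{q^2}} \pi_{J_j}(C).
\end{equation*}
Next, $\pi_{J_j}(C) = \omega\,\pi_{J_j}(C_X) + \omega^q\,\pi_{J_j}(C_Z)$ with $C_X = C_Z$ equals $\mathbf{F}_{q^2}\otimes_{\mathbf{F}_q}\pi_{J_j}(C_X)$. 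Its $\mathbf{F}_{q^2}$-dimension is therefore $\dim_{\mathbf{F}_q}\pi_{J_j}(C_X) = (\dim_{\mathbf{F}_q} \pi_{J_j}(C_X) + \dim_{\mathbf{F}_q} \pi_{J_j}(C_Z))/2$. This quantity is an integer, so the ceiling in (\ref{eq:sizeCSS}) is automatic.

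The main obstacle is the first step: making the identification of $\sigma_J(C^{\perp h})$ with the pair of Euclidean shortened codes rigorous. This requires the trace form $\mathrm{Tr}(\cdot)$, or more directly the symplectic form, relating Hermitian orthogonality over $\mathbf{F}_{q^2}$ to Euclidean orthogonality of the components. We must also check that this relation commutes with shortening. The cleanest route is to apply (\ref{eq:dual}) on both sides, which reduces the question to puncturing, where it is trivially coordinatewise. Alternatively, we can simply cite the equivalence of erasure-correction conditions from \citep{qlrc24}, since both (\ref{eq:cond2}) and (\ref{eq:cond3}) characterize correctability of the same quantum code $Q(C)=Q(C_X,C_Z)$. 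This second route bypasses the algebra entirely, and it is the one I would use first.
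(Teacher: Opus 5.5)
Your proposal is correct and follows the paper's own (one-line) argument: identify $Q(C_X,C_Z)$ with $Q(C)$ for $C=\omega C_X+\omega^q C_Z=\mathbf{F}_{q^2}\otimes C_X$, apply Proposition~\ref{lem1}, and use $\dim_{\mathbf{F}_{q^2}}\pi_{J_j}(C)=\dim_{\mathbf{F}_q}\pi_{J_j}(C_X)$. The step you flag as the main obstacle is easier than you expect. Since $C$ is $\mathbf{F}_{q^2}$-spanned by $C_X\subseteq\mathbf{F}_q^n$ and $x^q=x$ on $\mathbf{F}_q$, one gets $C^{\perp h}=\mathbf{F}_{q^2}\otimes C_X^{\perp e}$ directly, and puncturing and shortening commute with extension of scalars, so the equivalence of (\ref{eq:cond2}) and (\ref{eq:cond3}) is immediate.
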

  When $C_X = C_Z$,
  by Proposition \ref{prop:CSS} we can define the quantum information locality
  by
  \begin{equation}\label{eq:localCSS}
    \max_{1 \leq j \leq n} \left\lceil \frac{\dim_{\mathbf{F}_q} \pi_{J_j}(C_X) + \dim_{\mathbf{F}_q} \pi_{J_j}(C_Z)}{2} \right\rceil,
    \end{equation}
  which is equal to previously defined $r_{Q,i}$ for
  $C_X = C_Z$ and $C = \omega C_X + \omega^q C_Z$.
  
  However, when $C_X \neq C_Z$,
  $C = \omega C_X + \omega^q C_Z$ may not be $\mathbf{F}_{q^2}$-linear,
  Proposition \ref{prop:CSS} can fail
  and we cannot define a quantum information locality
  by (\ref{eq:localCSS}), which will be shown by an example below.

  We will give index sets $I \subsetneq J_j$, and codes
  $C_X \times C_Z \supsetneq C_Z^{\perp e} \times C_X^{\perp e}$
  satisfying (\ref{eq:cond3}) with $J$ replaced by $J_j$
  such that no subset $J \subseteq J_j$ with $I \subsetneq J$ satisfies both (\ref{eq:cond3}) and (\ref{eq:sizeCSS}).

  Let $q = 2$, $n = 6$, $J_j = \{1, 2, 3, 4, 5, 6\}$ for all $j$,
  and $I = \{1\}$. The cardinalities are $|J_j| = 6$ and $|I| = 1$, giving $|J_j| - |I| = 5$.
Define the component codes $C_X, C_Z \subset \mathbf{F}_2^6$ by specifying $C_Z^{\perp e}$ and $C_X^{\perp e}$:
\begin{align*}
C_Z^{\perp e} &= \mathrm{span}_{\mathbf{F}_2} \{ (1, 1, 1, 1, 1, 1) \}, \\
C_X^{\perp e} &= \mathrm{span}_{\mathbf{F}_2} \{ (1, 1, 0, 0, 0, 0), \; (0, 1, 1, 0, 0, 0), \; (0, 0, 1, 1, 0, 0) \}.
\end{align*}
Taking Euclidean duals in $\mathbf{F}_2^6$, the primal component codes are:
\begin{align*}
C_Z &= \left\{ \vec{y} \in \mathbf{F}_2^6 : \sum_{m=1}^6 y_m = 0 \right\}, \\
C_X &= \left\{ \vec{y} \in \mathbf{F}_2^6 : y_1 + y_2 = 0, \, y_2 + y_3 = 0, \, y_3 + y_4 = 0 \right\}. 
\end{align*}

By straightforward computation
we can confirm
$C_X \supsetneq C_Z^{\perp e}$, $C_Z \supsetneq C_X^{\perp e}$,
and condition (\ref{eq:cond3}) holds with $J$ replaced by $J_j$.
We will verify non-existence of valid subsets $J \subseteq J_j$
by evaluating all candidate subsets $J \subseteq J_j$ containing $I = \{1\}$:
\begin{description}
    \item[Full set candidate ($J = J_j = \{1, 2, 3, 4, 5, 6\}$):]
      The projection equalities (\ref{eq:cond3}) hold.
      However, the size difference is $|J_j| - |I| = 6 - 1 = 5$. Since $5 > 4$, the required bound in (\ref{eq:sizeCSS}) fails.
    \item[Proper subset candidates ($J \subsetneq J_j$):]
    If $J$ is a proper subset of $J_j$, the complement $\overline{J} = J_j \setminus J$ is non-empty. Any non-zero vector in $C_Z^{\perp e}$ is a scalar multiple of $\vec{1} = (1, 1, 1, 1, 1, 1)$, which has non-zero entries on all 6 coordinates. Because $\overline{J}$ contains at least one coordinate, restricting to zero on $\overline{J}$ forces the shortened code $C_Z^{\perp e}$ to be trivial: $\sigma_J(C_Z^{\perp e}) = \{\vec{0}\}$.
    Consequently, $\pi_I(\sigma_J(C_Z^{\perp e})) = \{0\} \neq \mathbf{F}_2 = \pi_I(C_X)$.
    The projection equalities (\ref{eq:cond3}) fail for all proper subsets.
\end{description}
Therefore, no subset $J \subseteq J_j$ with $I \subsetneq J$ satisfies both (\ref{eq:cond3}) and (\ref{eq:sizeCSS}), and we have confirmed
that Proposition \ref{prop:CSS} fails in this situation.

\section{Concluding remarks}\label{sec:final}
In this paper, we proposed the quantum information locality
in Definition \ref{def:qil} 
by following an idea behind its classical counterpart \citep[Definition 2]{kamath14},
and a computational procedure
in Proposition \ref{lem1}
that provides a repair group whose size is predicted by
the proposed information locality and a smallest set of observables measured
during quantum local recovery.
After that, we showed in Section \ref{sec:diff}
an example of a quantum stabilizer code
with which the proposed information locality gave a tighter
estimate on the number of required codeword symbols for
erasure correction than the previously proposed quantum symbol locality \citep{qlrc24}.
In Section \ref{sec4} we argued that it seems unobvious to
define the quantum information locality for a wider class of QECC
by showing an explicit example.
In the classical local recovery,
the $(r,\delta)$ symbol locality can be defined for any block error-correcting
codes including nonlinear ones \citep{prakash12},
and the $(r,\delta)$ information locality was only defined for
the quasi-uniform codes \citep[Definition 4]{grezet19},
which include linear codes
while they form  a proper subclass of general block error-correcting codes.
So the limited applicability of information locality seems
somewhat natural even for the quantum local recovery.

\paragraph{Acknowledgments}
The author would like to thank Carlos Galindo and Fernando Hernando
for drawing his attention to classical and quantum
hierarchical local recovery \citep{sasidharan15,hqlrc26},
which initiated this research.
This work was partially funded by the Japan Society for Promotion of Science
under Grant No.\ 23K10980.

\paragraph{Data availability}
A C program to verify all the mathematical claims in Section \ref{sec:diff}
by examining all related vectors was included in source files
of the arxiv.org eprint (version 1)
with the same title as this manuscript.

\paragraph{Declarations}
The author has no competing interests to declare that are relevant to the content of this paper.

\paragraph{Declaration of AI use}
Examples in Sections \ref{sec:diff} and \ref{sec4}
were discovered by Google Gemini 3.1Pro.
Their descriptions were written by the author.
Google Gemini 3.1Pro was also used to generate
a C program to verify all the mathematical claims in Section \ref{sec:diff}
by brute-force computation.
  

\end{document}